%% file: appendix.tex
\documentclass[acmsmall,nonacm]{acmart}
\usepackage{algorithm}
\usepackage[noend]{algpseudocode}
\usepackage{xcolor, pifont}
\usepackage{bm}
\usepackage{balance}
\usepackage{caption}
\usepackage{float} 
\usepackage{url}
\usepackage{subcaption}
\usepackage{tikz}
\usepackage{amsmath}
\usepackage{amsthm}
\usetikzlibrary{tikzmark}

\tikzset{mycircled/.style={circle,draw,inner sep=0.1em,line width=0.04em}}

\newtheorem{invariant}{Invariant}
\newtheorem{lemma}{Lemma}
\newtheorem{property}{Property}

\AtBeginDocument{%
  }

\renewcommand\footnotetextcopyrightpermission[1]{}
\begin{document}

\author{
    {Xincheng Yang} \\
    Illinois Institute of Technology \\
    \and
    {Kyle Hale}\\
    Oregon State University
}

\title{The Consistency Correctness in CoPPar Tree}

\renewcommand{\shortauthors}{X. Yang et al.}

\begin{abstract}
  This article serves as a supplementary document to the CoPPar Tree paper (ISPDC 2025), presenting a detailed correctness proof for the CoPPar architecture.
\end{abstract}

\maketitle

\settopmatter{authorsperrow=0}

\section{Introduction}
In the main paper, we introduced Write-order broadcast in CoPPar tree, which uses a global order to ensure consistency. However, in the main paper, we did not formally prove the correctness of our structure. Therefore, we provide the relevant proof in this supplemental document.

In this work, we primarily adopt definitions from two papers: one from Maurice Herlihy's Linearizability paper and the other from Lamport's Sequential consistency paper. Besides, We also adopted Kfir Lev-Ari's OSC paper to unify these two consistency definitions. To maintain readability, we will restate key definitions from these works. In addition, we also define the Composition Order Cycle and prove the correctness of CoPPar Tree.

In the definition from the Linearizability paper, objects satisfy locality: the system's history is linearizable if and only if, for every object x, the history of x is linearizable.

However, in many coordination services, by allowing clients to read from
a local server, they sacrifice the consistency level and thus do not satisfy
Linearizability. For example, ZooKeeper guarantees Linearizable writes and
sequential reads. While this still provides strong consistency guarantees, it
means that it does not satisfy the locality property of Linearizability: two
sequentially consistent systems combined together do not guarantee sequential consistency
as a whole. 

We refer to this as a composition problem, and thus independent, sequentially consistent
systems are not composable. In particular, by allowing stale reads, the process
order from different processors may form cyclic dependencies. These cyclic
dependencies prevent us from obtaining a global sequential order for all
operations.

Our work, CoPPar tree, builds on the theorem and correctness results from this
paper to solve the composition problem and provide a sequentially composable
model through our structure. 

\section{Model and Notation}
\input{notation}

\section{Definition of the COC Problem}
\input{coc}

\section{CoPPar Tree Correctness Proof}
\input{correctness}

\balance
\bibliographystyle{ACM-Reference-Format}
\bibliography{xincheng}

\end{document}

%% file: notation.tex
We use a standard replicated shared memory model to demonstrate the correctness of our structure. In this model, a set of objects $\Theta$ is shared and concurrently accessed by a collection of sequential threads $\Pi$ from processes \emph{P}. An object has a name, also referred to as a key, and a corresponding value. An operation \emph{op} is used for reading or writing the value of an object. The execution of an operation consists of an \emph{invocation} and a \emph{response} event, which we denote as $inv(op)$ and $res(op)$. A history \emph{H} is a finite sequence of operations. 

We reuse the definitions from the Linearizability paper~\cite{herlihy1990linearizability} and assume that all histories are well-formed in the context of this chapter:

\begin{quotation}
A \emph{process subhistory}, $H|P$, consists of all events in the history \emph{H} that are associated with the process \emph{P}. Two histories \emph{H} and \emph{H}' are \emph{equivalent} if, for every process \emph{P}, $H|P = H'|P$.
A \emph{sequential history} meets two requirements: (1) the first event in the history \emph{H} is an invocation, and (2) each invocation, except possibly the last one, is immediately followed by a corresponding response. Every response is directly followed by a corresponding invocation. If a history is not sequential, it is called a \emph{concurrent history}. A \emph{single-object} subhistory refers to a subsequence where all events are related to a single object, denoted by $H|x$ when the object is $x$. A process subhistory $H|P$ is the subsequence of all events in history \emph{H} that involve process \emph{P}. A history \emph{H} is considered \emph{well-formed} if every process subhistory $H|P$ is sequential. A complete history, denoted as \emph{complete(H)}, is a history where the last (pending) invocation is removed. A set \emph{S} of histories is considered \emph{prefix-closed} if, for any history \emph{H} in \emph{S}, every prefix of \emph{H} is also included in \emph{S}. A \emph{sequential specification} for an object is a prefix-closed set of single-object sequential histories associated with that object. A sequential history \emph{H} is considered \emph{legal} if each object subhistory $H|x$ belongs to the sequential specification of the corresponding object \emph{x}.
\end{quotation}

We also reuse an irreflexive partial order \( <_H \) from the Linearizability paper~\cite{herlihy1990linearizability} to represent the order of operations, where \( op_0 <_H op_1 \) denotes that \( \text{res}(op_0) \) precedes \( \text{inv}(op_1) \) in history \emph{H}. Two operations are \emph{concurrent} if there is no \( <_H \) relationship between them.

We then reproduce the definition of \emph{ordered sequential consistency} from \(\text{OSC}\)~\cite{lev2017composing} paper:

\begin{quotation}
A history \( H \) is OSC if it can be extended (by appending zero or more response events) to some history \( H' \), and there is a sequential specification \( S \) of \emph{complete(H')}, such that:
\begin{enumerate}
    \item L1 (Sequential specification): \( \forall x \in \Theta \), \( S|x \) belongs to the sequential specification of \( x \).
    \item L2 (Process order): For two operations \( op_0 \) and \( op_1 \), if \( \exists P \in \Pi \) and \( op_0 <_{H|P} op_1 \), then \( op_0 <_S op_1 \).
    \item L3 (Real-time order): There exists a subset of object operations \( A \) such that \( \forall x \in \Theta \), for an operation \( op_0 \) (not necessarily in \( A \)) and an operation \( op_1 \in A \), such that \( op_0, op_1 \in H|x \), if \( op_0 <_H op_1 \) then \( op_0 <_S op_1 \).
\end{enumerate}
\end{quotation}

By the definition of consistency, both Linearizability and Sequential Consistency are special cases of OSC. 

We achieve sequential consistency by satisfying L1 and L2, which represent a global order and a process order, respectively~\cite{lamport1979make}. Linearizability is achieved when the subset \( A \) includes all operations, which causes L3 to imply \( <_H \subset <_S \). Under this condition, L1 and L2 ensure that \(\text{complete}(H')\) is equivalent to a legal sequential history \( S \). Therefore, linearizability follows from its definition~\cite{herlihy1990linearizability}.

%% file: coc.tex
We formally define the composition order cycle problem.

\textbf{Composition Order Cycle (COC):} A history \( H \) contains COC if it can be extended (by appending zero or more response events) to some complete history \( H' \), such that: 
\begin{enumerate}
\item{L4 (subhistory equivalent):} \( \forall x \in \Theta \), \emph{complete}\( (H'|x) \) is equivalent to some legal sequential subhistory \( S|x \).
\item{L5 (cyclic order):} Let \( <_x \) be the corresponding single-object order of \( S|x \). Let \( < \) be the transitive closure of the union of all \( <_x \) orders with \( <_{H|P} \) orders. There exists a set of operations \( op_1, op_2, \ldots, op_n \) such that a cyclic dependency exists with \( op_1 < op_2 < \ldots < op_n \), \( op_n < op_1 \), and each pair is directly related by some \( <_x \) or \( <_{H|P} \).
\end{enumerate}

%% file: correctness.tex
\begin{lemma}
\label{lemma:1}
OSC and COC are mutually exclusive. Moreover, a history \( H \) that satisfies
\( L4 \) and does not satisfy \( L5 \) is OSC.
\end{lemma}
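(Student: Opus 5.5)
I would prove the two claims separately, both by relating a global sequential order $S$ to the relation $<$ of L5. The bridge between OSC and COC is one observation. Suppose $S$ is a legal sequential history of \emph{complete}$(H')$ satisfying L1 and L2. Then for every $x$ the restriction $S|x$ is a legal sequential subhistory equivalent to \emph{complete}$(H'|x)$, and the per-object orders $<_x$ it induces are contained in $<_S$. By L2, every $<_{H|P}$ is contained in $<_S$ as well. Since $<_S$ is a strict total order, and hence transitive and irreflexive, the transitive closure $<$ of these relations is contained in $<_S$.

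\textbf{Mutual exclusion.} Suppose, for contradiction, that $H$ is both OSC and COC. For the OSC extension $H'$ and order $S$, the observation above shows that every edge of the form $<_x$ (taken from $S|x$) or $<_{H|P}$ is an edge of $<_S$. A cycle $op_1<\dots<op_n<op_1$ would then give $op_1<_S op_1$, contradicting irreflexivity.

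The delicate point is that COC only asserts the existence of \emph{some} extension $H'$ and \emph{some} choice of per-object sequential subhistories $S|x$ that produce a cycle. These need not coincide with the ones coming from the OSC witness. I would handle this in one of two ways. The first is to read the definitions as the paper intends, with the per-object orders in L4/L5 instantiated by the projections of the global order under consideration. The second is to state explicitly that ``mutually exclusive'' is meant with respect to a fixed choice of extension and per-object serializations. I expect pinning down this quantifier alignment to be the main obstacle. I would resolve it by fixing $H'$ and the family $\{S|x\}$ once and for all and proving the claim relative to that fixed witness.

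\textbf{Acyclicity implies OSC.} Take $H'$ and the legal subhistories $S|x$ given by L4, and suppose L5 fails, so the relation $<$ has no cycle. Then $<$ is a strict partial order on the operations of \emph{complete}$(H')$, being transitive by construction and irreflexive by acyclicity. By Szpilrajn's extension theorem, which is just a topological sort here since the history is finite, $<$ extends to a total order. Let $S$ be the corresponding sequential history.

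I would then check the OSC conditions in turn.
\begin{itemize}
\item For L1, the restriction of $S$ to $x$ respects $<_x$, which is total on the operations of $x$, so it equals the legal $S|x$.
\item For L2, the relation $<_{H|P}$ is contained in $<$, which is contained in $<_S$.
\item For L3, I would take $A=\emptyset$, which makes the condition vacuous.
\end{itemize}
Finally, equivalence of $S$ with \emph{complete}$(H')$ follows from L2 together with the fact that both contain the same operations. Hence $H$ is OSC.
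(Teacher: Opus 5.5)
Your proposal is correct and follows the paper's proof essentially step for step: embed every $<_x$ and $<_{H|P}$ into $<_S$ to rule out a cycle, and conversely extend the acyclic closure to a total order. Along the way you fill in steps the paper leaves implicit, namely taking $A=\emptyset$ for L3 and checking that $S$ restricted to $x$ is exactly the given $S|x$.

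The quantifier mismatch you flag is genuine, and the paper's proof silently assumes the same alignment you impose. Without that alignment, mutual exclusion actually fails. For example, let $P_1$ write $x$ then $y$ and $P_2$ write $y$ then $x$, with no reads. This history is OSC, yet ordering $P_2$ first on $x$ and $P_1$ first on $y$ yields a COC cycle. So proving the claim relative to a fixed witness is the right repair.
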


\begin{proof}
We first show that OSC and COC are mutually exclusive.

If a history \( H \) is OSC, then by definition there exists a legal sequential
history \( S \) such that \( S \) satisfies \( L1 \), \( L2 \), and \( L3 \).
In particular, \( S \) induces a strict total order \( <_S \) that extends all
per-object orders \( <_x \) and all per-process orders \( <_{H|P} \).
Therefore, the union of these relations is acyclic, and no cyclic dependency
as described in \( L5 \) can exist. Hence, \( H \) does not contain COC.

Conversely, suppose that \( H \) contains COC. Then by \( L5 \), the transitive
closure of the union of all per-object orders \( <_x \) and per-process orders
\( <_{H|P} \) contains a cycle. No strict total order can extend a cyclic
relation, and thus no sequential history \( S \) can satisfy \( L1 \)--\( L3 \).
Therefore, \( H \) is not OSC.

We now show that if a history \( H \) satisfies \( L4 \) and does not satisfy
\( L5 \), then it is OSC. Since \( L5 \) does not hold, the transitive closure
of the union of all per-object orders \( <_x \) and per-process orders
\( <_{H|P} \) is acyclic, and thus forms a strict partial order over operations.
By the order-extension theorem, there exists a strict total order \( <_S \)
that extends this partial order. The corresponding sequential history \( S \)
is legal by \( L4 \), and preserves all per-process orders by construction,
thereby satisfying \( L1 \) and \( L2 \). Hence, \( H \) is OSC.
\end{proof}

\begin{lemma}
\label{lemma:2}
A history \( H \) does not contain a composition order cycle (COC) if all write operations are in a strict total order \( < \).
\end{lemma}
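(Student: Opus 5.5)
We argue by contradiction. Assume that all write operations are totally ordered by a strict order \( <_W \), and that \( H \) nevertheless contains a COC. So there is an extension \( H' \) satisfying L4 and a cycle \( op_1 < op_2 < \ldots < op_n < op_1 \) as in L5, where every consecutive pair is related directly by some \( <_x \) or some \( <_{H|P} \). The plan is to map every operation on the cycle to a position in the write order and show that these positions can never move backward along the cycle. Since the cycle returns to its start, no edge can move forward either, and this will force a contradiction.

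\textbf{Step 1: the legal per-object orders must agree with the global write order.} We take this as the intended reading of the hypothesis, since it is what happens in the CoPPar implementation: every replica applies writes in the broadcast order. Under this reading, each legal \( S|x \) lists the writes to \( x \) in \( <_W \) order. Each read \( r \) on \( x \) returns the value of a unique latest preceding write \( w(r) \) in \( S|x \). We then define a rank function
\[
\rho(op) = \begin{cases} op & \text{if } op \text{ is a write},\\ w(op) & \text{if } op \text{ is a read,} \end{cases}
\]
with values in the totally ordered set of writes. A read that precedes every write to its object gets a bottom element \( \bot \). We also record whether the operation is a read or a write, and break ties by placing a write before the reads that observe it.

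\textbf{Step 2: every edge of the cycle is monotone in \( \rho \).} We check the two kinds of edges.
\begin{itemize}
\item For a \( <_x \) edge: write-to-write edges follow \( <_W \); a write \( w \) before a read \( r \) gives \( \rho(w) \le \rho(r) \); a read \( r \) before a write \( w \) gives \( \rho(r) < w \).
\item For a \( <_{H|P} \) edge: we need the process to observe the write order monotonically. That is, a process never reads a state older than one it has already produced or observed. This is the session or FIFO guarantee: a client reads from a server that applies the global write order as a prefix, and it never moves to a staler server.
\end{itemize}
With both cases, \( op_i < op_{i+1} \) implies \( (\rho(op_i),\text{type}) \le (\rho(op_{i+1}),\text{type}) \) lexicographically.

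\textbf{Step 3: close the cycle.} Going around the cycle, the ranks are non-decreasing and return to their starting value, so every rank on the cycle is equal. All operations on the cycle are then either one write \( w \) together with reads that return \( w \), or reads that all see the same write. Two distinct writes of equal rank are impossible, and a write cannot be preceded in the cycle by a read that observes it. The remaining case is a cycle made only of reads of the same state, linked by process order and by \( <_x \) edges. We break it by ordering the reads of equal rank by their real-time or process position. Such reads commute in every \( S|x \), so the cycle can be eliminated by re-choosing the \( S|x \) allowed by L4. The conclusion is that L5 fails, and Lemma~\ref{lemma:1} then gives OSC.

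\textbf{Main obstacle.} I expect the hard part to be Step 2 for process-order edges, together with the all-reads case of Step 3. The lemma's hypothesis mentions only writes, so I would explicitly state the implicit assumption that each process's reads are monotone with respect to the write order, as guaranteed by the write-order broadcast. If that assumption is not allowed, my first fallback is to define \( \rho \) through the server state each operation executes on. That state is always a prefix of the global write order, so monotonicity follows from the fact that a process stays on its server.
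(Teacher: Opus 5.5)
\textbf{There is a genuine gap, in Step 2 and again in Step 3.}

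\textbf{Step 2: your rank is not monotone on process-order edges.} The rank $\rho$ fails to be non-decreasing along $<_{H|P}$ edges, even under the session guarantee you invoke. Let $w_1 <_W w_2$, where $w_1$ writes $y$ and $w_2$ is a write to $x$ issued by $P$. Let $P$ next read $y$ from a fully up-to-date replica, obtaining $w_1$'s value. Then $w_2 <_{H|P} r$, yet $\rho(r) = w_1 <_W w_2 = \rho(w_2)$, so your lexicographic key strictly decreases.

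Monotone reads constrain the \emph{state} a process reads from. They do not constrain the latest write to whichever object it touches next. So $\rho$ can drop on any edge that switches objects, and a cycle may pass through such edges. Step 3's conclusion that all ranks on the cycle coincide therefore does not follow.

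Your fallback ranks each operation by the length of the write-order prefix it executes on. That does repair process-order edges, but then $<_x$ edges can decrease. Two reads of $x$ by different processes that return the same write may be served from prefixes of different lengths, and a legal $S|x$ is free to place the fresher one first. So with either rank the $S|x$ cannot be taken as given. They have to be chosen, ordering the operations on each object by prefix length, then write before read, then a tie-break consistent with process order.

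\textbf{Step 3: re-choosing $S|x$ contradicts your own setup.} You assume a COC, i.e.\ particular $S|x$ whose orders contain a cycle. In Step 3 you escape by re-choosing the $S|x$, but exhibiting different witnesses does not refute the assumed ones. Nor can the escape be avoided. Take $P_1\colon r(x), r(y)$ and $P_2\colon r(y), r(x)$, all returning initial values. The legal choices that put $P_2$'s read first in $S|x$ and $P_1$'s read first in $S|y$ yield a cycle containing no writes at all.

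So the statement can only be established in the form ``some admissible choice of the $S|x$ leaves no cycle.'' The natural proof is then constructive. Sort all operations by the key above, let each $S|x$ be its restriction, and verify legality and process order. This in fact yields a legal sequential $S$ satisfying L1 and L2 directly.

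\textbf{Comparison with the paper.} The paper takes a much shorter route. It argues that any cycle must contain at least two writes, and that the strict total order on writes, being acyclic, excludes it. The all-read example above shows that its first step (no cycle consists solely of reads) tacitly relies on the same freedom to choose the $S|x$. Your point is well taken that the write order constrains the cycle's edges only if the per-object orders agree with it and processes read monotonically. But your potential-function argument, as written, does not yet close.
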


\begin{proof}
We prove by contradiction. Assume that history \( H \) contains a COC. By
Definition~L5, there exists a cycle in the transitive closure of the union of
all per-object orders \( <_x \) and per-process orders \( <_{H|P} \).

Observe that a cycle cannot consist solely of read operations, since read
operations do not impose ordering constraints on object states and thus cannot
introduce cyclic dependencies across object orders. Hence, any COC must include
at least one write operation.

Moreover, a cycle must include at least two distinct write operations.
Otherwise, if the cycle contained exactly one write operation, removing that
write would eliminate all object-level ordering constraints, leaving only
per-process orders, which are acyclic by definition. This contradicts the
existence of a cycle.

Therefore, any COC must contain at least two write operations. However, by
assumption, all write operations are totally ordered by a strict total order.
A strict total order is acyclic and antisymmetric, and thus cannot contain a
cycle. Hence, no such COC can exist in \( H \), yielding a contradiction.
\end{proof}

This lemma has been used in many sequential coordination services, such as ZooKeeper, which uses a linearizable write to ensure all reads are sequential. However, they did not formally prove it this way. Zoonet provides synchronization operations to make their system OSC, and they also offered a similar proof based on their mechanism.

\begin{invariant}
\label{invariant:1}
We require the following invariants to hold in our system for write-order broadcast.

\emph{Integrity.} If some process delivers a message \( m \), then there exists a
process \( P_i \in \Pi \) that has broadcast \( m \).

\emph{Strict Total Order} (\( <_{\text{cast}} \)).
If a process delivers message \( m \) before message \( m' \), then every process
that delivers \( m' \) must also deliver \( m \), and must deliver \( m \) before
\( m' \).
\end{invariant}

\begin{property}
\label{property:1}
For executions in which the object-to-node mapping in the CoPPar Tree remains
unchanged, the write-order broadcast ensures subhistory equivalence: for every
object \( x \in \Theta \), \( \emph{complete}(H'|x) \) is equivalent to some legal
sequential subhistory \( S|x \).
\end{property}

By Invariant~\ref{invariant:1}, all processes observe the same strict total order
of delivered write messages. For a fixed object \( x \), this induces a strict
total order over all write operations on \( x \). Together with the fact that
each process subhistory is sequential, this total order defines a legal
sequential subhistory \( S|x \) equivalent to \( \emph{complete}(H'|x) \).

\begin{property}
\label{property:2}
In any history \( H \) that satisfies \( L4 \) (subhistory equivalence),
CoPPar-Cen and CoPPar-Dec ensure that all write operations are ordered by a
global strict total order \( < \).
\end{property}

According to Invariant~\ref{invariant:1}, the write-order broadcast delivers all write operations in a strict total order \( <_{\text{cast}} \) at all processes. For two write operations
on the same object \( x \), the corresponding single-object order \( <_x \) and the
per-process order \( <_{H|P} \) are preserved within \( <_{\text{cast}} \). For two
write operations on different objects \( x \) and \( y \), since there is no direct
single-object order between them, their relative order in \( <_{H|P} \) is preserved
by the broadcast algorithm. Consequently, \( <_{\text{cast}} \) is a strict total order
that encompasses both all per-object orders \( <_x \) and all per-process orders
\( <_{H|P} \).

\begin{property}
\label{property:osc}
For executions in which the object-to-node mapping remains unchanged (i.e.,
processes do not perform change node operations), the CoPPar Tree guarantees OSC.
\end{property}

\begin{proof}
By Property~\ref{property:1}, in such executions the CoPPar Tree satisfies \( L4 \)
(subhistory equivalence). Meanwhile, by Property~\ref{property:2} and
Invariant~\ref{invariant:1}, all write operations are delivered in a strict total
order \( <_{\text{cast}} \) that preserves both per-object orders \( <_x \) and
per-process orders \( <_{H|P} \).

Let \( H \) denote the history of operations in the CoPPar Tree. Then, by
Lemma~\ref{lemma:2}, \( H \) does not contain a composition order cycle (COC).
Since \( H \) satisfies \( L4 \) and contains no COC, Lemma~\ref{lemma:1} implies
that the CoPPar Tree guarantees ordered sequential consistency (OSC).
\end{proof}

\begin{property}
\label{property:osc-change}
The CoPPar Tree guarantees OSC even when processes perform change node operations.
\end{property}

\begin{proof}
We prove this by induction on the number of change node operations in the history \( H \).

\textbf{Base case:} No change node operations. By Property~\ref{property:osc}, the CoPPar Tree guarantees OSC.

\textbf{Inductive step:} Assume that for \( k \) change node operations, the history
can be linearized to a sequential history \( S \) satisfying OSC. Consider adding the
\((k+1)\)-th change node operation \( op_{\text{change}} \) performed by process \( P_i \).

By the write-order broadcast mechanism and Invariant~\ref{invariant:1}, all writes,
including those due to upgrade or downgrade operations, are delivered in a strict
total order \( <_{\text{cast}} \) at all processes. The CoPPar Tree ensures that
causal dependencies between parent and child nodes are preserved (\( w^{\text{child}} <_c w^{\text{parent}} \)).

Therefore, inserting \( op_{\text{change}} \) does not violate the global strict total
order of writes. All per-process orders \( <_{H|P} \) and per-object orders \( <_x \)
are maintained. By Lemma~\ref{lemma:2}, no composition order cycle (COC) is formed.
Since \( L4 \) holds and no COC exists, Lemma~\ref{lemma:1} implies that OSC is
guaranteed after \( k+1 \) change node operations.

By induction, OSC holds for any finite number of change node operations.
\end{proof}

In summary, the CoPPar Tree guarantees OSC and eliminates the COC problem.